\newtcolorbox[
  auto counter,
  crefname={Algorithm}{Algorithms}]%
{myalgorithm}[2][]{
breakable,
enhanced,
  fonttitle=\bfseries,
  code={},
  title=Algorithm \thetcbcounter: #2,%
  #1}
\renewcommand{\backref}[1]{}
\renewcommand{\backrefalt}[4]{%
\ifcase #1 %
\or
[p.\ #2]%
\else
[pp.\ #2]%
\fi}
\newtheorem{theorem}{Theorem}[section]
\newtheorem*{namedtheorem}{\theoremname}
\newcommand{\theoremname}{testing}
\newtheorem{lemma}[theorem]{Lemma}
\theoremstyle{definition}
\renewcommand{\Pr}{\mathop{\bf Pr\/}}
\newcommand{\tr}{\mathrm{tr}}
\newcommand{\R}{\mathbb R}
\newcommand{\C}{\mathbb C}
\newcommand{\eps}{\epsilon}
\newcommand{\calD}{\mathcal{D}}
\newcommand{\abs}[1]{\lvert #1 \rvert}
\newcommand{\Abs}[1]{\Bigl\lvert #1 \Bigr\rvert}
\newcommand{\norm}[1]{\lVert #1 \rVert}
\newcommand{\ignore}[1]{}
\newcounter{termcounter}[equation]
\renewcommand{\thetermcounter}{\the\numexpr\value{equation}+1\relax.\roman{termcounter}}
\crefname{term}{term}{terms}
\def\term{\@ifnextchar[\term@optarg\term@noarg}
\def\term@optarg[#1]#2{%
  \textup{#1}%
  \def\@currentlabel{#1}%
  \def\cref@currentlabel{[][2147483647][]#1}%
  \cref@label[term]{#2}}
\def\term@noarg#1{%
  \refstepcounter{termcounter}%
  \textup{\thetermcounter}%
  \cref@label[term]{#1}}
\title{Efficient Tomography of Non-Interacting Fermion States}
\author{
Scott Aaronson\thanks{\href{mailto:aaronson@cs.utexas.edu}{\texttt{aaronson@cs.utexas.edu}}. \ Supported by a Vannevar Bush Fellowship from the US Department of Defense, the Berkeley NSF-QLCI CIQC Center, a Simons Investigator Award, and the Simons “It from Qubit” collaboration.}\\
\small{\sl University of Texas at Austin }
\and Sabee Grewal\thanks{\href{mailto:sabee@cs.utexas.edu}{\texttt{sabee@cs.utexas.edu}}. \ Supported by Scott Aaronson's Vannevar Bush Fellowship from the US Department of Defense, the Berkeley NSF-QLCI CIQC Center, a Simons Investigator Award, and the Simons “It from Qubit” collaboration.}\\
\small{\sl University of Texas at Austin}
}
\date{}
\begin{document}

\maketitle

\begin{abstract}
We give an efficient algorithm that learns a non-interacting fermion state, given copies of the state. \
For a system of $n$ non-interacting fermions and $m$ modes, we show that $O(m^3 n^2 \log(1/\delta) / \eps^4)$ copies of the input state and $O(m^4 n^2 \log(1/\delta)/ \eps^4)$ time are sufficient to learn the state to trace distance at most $\eps$ with probability at least $1 - \delta$. \
Our algorithm empirically estimates one-mode correlations in $O(m)$ different measurement bases and uses them to reconstruct a succinct description of the entire state efficiently. 
\end{abstract}

\section{Introduction}
There are two types of particles in the universe: bosons and fermions. \ 
Bosons include force carriers, such as photons and gluons, and fermions include matter particles like quarks and electrons. \  
Each particle can be in a certain mode (e.g., a position or state). \
For a system of $n$ particles, a configuration of the system is described by specifying how many particles are in each of $m$ modes. \
Bosons are particles where multiple occupancy of a mode is allowed, whereas fermions are particles where multiple occupancy is forbidden;
that is, two or more fermions cannot occupy the same mode at once (this is the \textit{Pauli exclusion principle}). \
It follows that a system of $n$ fermions and $m$ modes has $\binom{m}{n}$ possible configurations; we denote the set of possible configurations by $\Lambda_{m,n}$.

Our main result is an efficient algorithm (both in copy complexity and time complexity) for learning a non-interacting fermion state (also called a free-fermion state or a Gaussian fermion state), which is a superposition over configurations in $\Lambda_{m,n}$. \ 
Even though a non-interacting fermion state lives in an exponentially large Hilbert space, we show how to exploit its structure to output a succinct description of the state efficiently. \
A non-interacting fermion state can be completely specified by an $m \times n$ column-orthonormal matrix $A$. \
Our algorithm measures copies of the input state in $O(m)$ different measurement bases, and uses the measurement data to reconstruct an $m \times n$ matrix $\hat{A}$ in polynomial time. \
We prove that a polynomial number of copies of the input state is enough for the output state to be $\eps$-close to the original state in trace distance. 

\begin{theorem}[Main result]
Let $\ket{\Psi}$ be a state of $n$ non-interacting fermions and $m$ modes. \ 
There exists an algorithm that uses $O(m^3 n^2 \log(1/\delta)/ \eps^4)$ copies of $\ket{\Psi}$, $O(m^4 n^2 \log(1/\delta)/\eps^4)$ classical time, and $O(m)$ measurement bases, and outputs a succinct description of a non-interacting fermion state $\ket{\hat{\Psi}}$ that is $\eps$-close in total variation distance to $\ket{\Psi}$ with probability at least $1 - \delta$.
\end{theorem}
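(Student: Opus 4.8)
\medskip
\noindent\textbf{Proof proposal.}

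The plan is to reduce the problem to estimating the entries of the one-particle correlation matrix $K$ (the Hermitian rank-$n$ projection with $K_{ij} = \langle \Psi | a_i^\dagger a_j | \Psi\rangle$ whose order-$n$ principal minors give the probabilities $\Pr[S] = \det K_S$), and then to show that an entrywise-accurate estimate $\hat K$ induces a determinantal distribution close to $\calD_K$ in total variation. The diagonal entries are exactly the single-mode occupation probabilities, so averaging the $i$-th outcome bit over standard-basis measurements estimates all $K_{ii}$ at once. For the off-diagonals I would use that a passive, particle-number-conserving linear-optical unitary $U$ acts on the correlation matrix by conjugation, $K \mapsto UKU^\dagger$, and is efficiently implementable; measuring occupations after $U$ thus samples from $\calD_{UKU^\dagger}$, whose diagonals are known linear combinations of the entries of $K$. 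Concretely, a two-mode beamsplitter on modes $(i,j)$ at angle $\pi/4$ makes the new occupation of one output mode equal $\tfrac12(K_{ii}+K_{jj}) + \mathrm{Re}(K_{ij})$, and inserting a $\pi/2$ relative phase replaces $\mathrm{Re}(K_{ij})$ by $\mathrm{Im}(K_{ij})$; since the $K_{ii}$ are already known, each such measurement isolates one off-diagonal entry.

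To use only $O(m)$ bases rather than $O(m^2)$, I would apply disjoint beamsplitters in parallel. A $1$-factorization of the complete graph on $m$ vertices partitions all $\binom{m}{2}$ pairs into $m-1$ perfect matchings, and each matching is a product of $\floor{m/2}$ disjoint two-mode beamsplitters, hence a single passive unitary; running two phase settings per matching gives $O(m)$ measurement bases. In each basis a single batch of copies simultaneously estimates all $\Theta(m)$ relevant diagonals, because one copy reveals the full occupation pattern. A Chernoff bound together with a union bound over the $O(m^2)$ entries shows that $O(\log(1/\delta)/\eta^2)$ copies per basis suffice to learn every entry of $K$ to additive error $\eta$ with probability $1-\delta$, for a total of $O(m\log(1/\delta)/\eta^2)$ copies; processing each copy costs $O(m)$ time to read off and accumulate its $m$ occupation values.

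The technical heart is a stability estimate: if $\hat K$ is entrywise $\eta$-close to $K$, then $\dtv{\calD_K}{\calD_{\hat K}}$ is small. Writing the distance as
\[
\dtv{\calD_K}{\calD_{\hat K}} = \tfrac12 \sum_{|S|=n}\bigl| \det \hat K_S - \det K_S \bigr|,
\]
I would control it by a hybrid argument along the path $K_t = (1-t)K + t\hat K$. Differentiating gives $\det \hat K_S - \det K_S = \int_0^1 \sum_{k,l\in S} (\hat K - K)_{kl}\, C^{(S)}_{kl}(t)\,dt$, where $C^{(S)}_{kl}(t)$ is the signed $(n-1)\times(n-1)$ cofactor of $(K_t)_S$; taking absolute values, summing over all $S$, and regrouping by the mode pair $(k,l)$ reduces the task to bounding $\eta$ times a sum of absolute order-$(n-1)$ minors, which I would bound using the projection structure (the eigenvalues of $K_t$ lie in $[0,1]$) together with a Cauchy--Binet-type identity. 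I expect this step---controlling a sum of exponentially many minor differences by a polynomial in $m,n$ times $\eta$---to be the main obstacle, and I anticipate it yields a bound of the form $\dtv{\calD_K}{\calD_{\hat K}} \lesssim n\sqrt{m}\,\eta$, which is precisely what the stated copy complexity requires: setting $\eta = \Theta(\eps/(n\sqrt m))$ makes the total variation error $\eps$ and the copy count $O(m^2 n^2\log(1/\delta)/\eps^2)$.

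Finally, since the empirical $\hat K$ need not be an exact correlation matrix, I would add a cheap post-processing step---symmetrizing it and truncating its spectrum into $[0,1]$ (equivalently, projecting onto the nearest valid kernel)---and argue that this does not increase its distance to $K$, so that $\calD_{\hat K}$ is a bona fide non-interacting fermion distribution. Combining the estimation guarantee, the stability lemma, and this correction yields the theorem; the time bound $O(m^3 n^2\log(1/\delta)/\eps^2)$ follows from the $O(m)$ per-copy processing cost times the $O(m^2 n^2\log(1/\delta)/\eps^2)$ copies.
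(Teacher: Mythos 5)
Your measurement scheme and complexity accounting coincide with the paper's (diagonals from standard-basis statistics; the two beamsplitter phase settings on pairs arranged into perfect matchings, giving $O(m)$ bases and $O(m)$ entries recovered per basis). The gap is precisely at what you yourself flag as the technical heart, the stability estimate: you only \emph{anticipate} the bound $d_{\mathrm{TV}}(\calD_K,\calD_{\hat K})\lesssim n\sqrt{m}\,\eta$, and the route you sketch cannot deliver it. After interpolating $K_t=(1-t)K+t\hat K$, expanding $\tfrac{d}{dt}\det (K_t)_S$ in cofactors, taking absolute values, and regrouping by entry pairs, you must bound $\sum_{k,l}\abs{(\hat K-K)_{kl}}\sum_{S\ni k,l}\abs{C^{(S)}_{kl}(t)}$. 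Already the diagonal part equals
\[
\eta\sum_{k}\sum_{S\ni k}\det\bigl((K_t)_{S\setminus k}\bigr)\;=\;\eta\,(m-n+1)\,e_{n-1}\bigl(\lambda(K_t)\bigr)\;\approx\;\eta\, m n,
\]
since the sum of order-$(n-1)$ principal minors is the $(n-1)$-st elementary symmetric polynomial of the eigenvalues, which is $\approx n$ for a matrix near a rank-$n$ projector. So the best this $\ell_1$-style argument can give is $\Theta(mn\eta)$, a factor $\sqrt{m}$ worse than needed: you would be forced to take $\eta=\Theta(\eps/(mn))$, yielding copy complexity $O(m^3 n^2\log(1/\delta)/\eps^2)$ rather than the claimed $O(m^2 n^2\log(1/\delta)/\eps^2)$. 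The loss is intrinsic to discarding the signs of the cofactors: entrywise error $\eta$ only implies $\norm{\hat K-K}_{2}\le\sqrt{m}\,\eta$ (not $m\eta$), and only an argument that works at the level of operator norms sees this cancellation.

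The paper's proof routes around this through the quantum states themselves: $d_{\mathrm{TV}}(\calD_{\hat K},\calD_K)\le D_{\mathrm{tr}}(\ket{\hat\Psi},\ket{\Psi})\le\norm{\ket{\hat\Psi}-\ket{\Psi}}_2=\norm{\varphi(\hat A)-\varphi(A)}_{2}$, then a Grassmann-algebra argument showing that $\varphi$ multiplies singular values, giving $\norm{\varphi(\hat A)-\varphi(A)}_{2}\le n\norm{\hat A-A}_{2}$; a Procrustes-type lemma giving $\norm{\hat A R-A}_{2}\le 2\norm{\hat K-K}_{2}$ for a suitable unitary $R$; and finally $\norm{\hat K-K}_{2}\le\sqrt{m}\gamma$ from the entrywise guarantee. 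To repair your proposal you would need to replace the cofactor expansion by these spectral-norm steps (or something equivalent). Two smaller issues: your rounding step claims projection onto the nearest valid kernel ``does not increase the distance,'' but the set of rank-$n$ projectors is not convex, so this is false as stated---though since $K$ itself is a valid kernel, a triangle inequality loses only a factor $2$, and some such rounding is indeed needed for $\calD_{\hat K}$ to be a genuine fermion distribution and for the stability lemma (which assumes $\hat K=\hat A\hat A^\dagger$ with $\hat A$ column-orthonormal) to apply. Also, your union bound over $O(m^2)$ entries requires $O(\log(m/\delta)/\eta^2)$ copies per basis rather than $O(\log(1/\delta)/\eta^2)$ as written.
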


Our algorithm can also be adapted to conventional quantum state tomography, which we explain in \cref{sec:tomography}. \ 

\subsection{Main Ideas}\label{subsec:main-ideas}
Here and throughout, let $U \in \C^{m \times m}$ be the unitary that prepares the unknown non-interacting fermion state $\ket{\Psi}$ from the standard initial state $\ket{1_n}$ (the state where the first $n$ modes are occupied and the remaining are unoccupied), and let $A \in \C^{m \times n}$ be the column-orthonormal matrix corresponding to the first $n$ columns of $U$. \
Define $K = (k_{ij}) \coloneqq A A^\dagger \in \C^{m \times m}$. \ We refer to $K$ as the \textit{kernel matrix} due to the connection between determinantal point processes and non-interacting fermions (which we discuss further in \cref{subsec:dpp}). \ In the physics literature, the kernel matrix is also called the one-body reduced density matrix (1-RDM) or the correlation matrix. \

The elements of $\Lambda_{m,n}$ are the possible configurations of a system of $n$ non-interacting fermions and $m$ modes. \ Formally, $\Lambda_{m,n}$ is the set of all lists $S = (s_1, \ldots, s_m$) such that $s_i \in \{0,1\}$ and $\sum_{i\in [m]}s_i= n$. \ 
The set $\{\ket{S}_{S \in \Lambda_{m,n}}\}$ is a basis for $n$-fermion and $m$-mode systems, which we refer to as the standard basis. \
The $m \times n$ column-orthonormal matrix $A$ describes the state 
\[
\ket{\Psi} = \sum_{S \in \Lambda_{m,n}} \det(A_S) \ket{S},
\]
where, for $S = (s_1, \ldots, s_m) \in \Lambda_{m,n}$, $A_S$ is the $n \times n$ submatrix obtained by removing row $i$ of $A$ if $s_i = 0$. \ 
Therefore, upon measurement, we observe the configuration $S \in \Lambda_{m,n}$ with probability 
\[
\abs{\braket{S | \Psi}}^2 = \abs{\det(A_S)}^2 = \det(K_S),
\]
where, for $S = (s_1, \ldots, s_m)$, $K_S$ is the $n \times n$ submatrix obtained by removing row and column $i$ of $K$ if $s_i = 0$. \
In other words, upon measurement, we observe a configuration $S \in \Lambda_{m,n}$ with probability equal to the corresponding principal minor of the kernel matrix $K$. \
The probability that any subset of $k$ modes is occupied corresponds to a principal minor of order $k$, obtained as above (remove the rows and columns of $K$ corresponding to unoccupied modes and compute the determinant of the resulting submatrix). \ 
For example, the diagonal entries $k_{ii}$ correspond to the one-mode correlations (i.e., $k_{ii}$ is the probability that mode $i$ is occupied). \ 
Passing $\ket{\Psi}$ through a unitary transformation $V \in \C^{m \times m}$ maps $K$ to $V K V^\dagger$. \
Given copies of the unknown state $\ket{\Psi}$, our goal is to output a column-orthonormal matrix $\hat{A}$ such that $\ket{\hat{\Psi}} = \sum_S \det(\hat{A}_S) \ket{S}$ is $\eps$-close to $\ket{\Psi}$ in trace distance. 

At a high level, our algorithm constructs $\hat{K}$ (an approximation of the kernel matrix), computes a decomposition $\hat{K} = \hat{A} \hat{A}^\dagger$, and then outputs $\hat{A}$. \
Our algorithm begins by measuring $O(\log(1/\delta)/\gamma^2)$ copies of the input state in the standard basis to empirically estimate the one-mode correlations of the state to accuracy $\pm \gamma$. \ 
The estimates are obtained simply by computing the average number of times each mode was occupied and are the diagonal entries of $\hat{K}$. \
One can then estimate the $(i,j)$ entry of $K$ as follows. \
Apply the beamsplitter 
\[
\frac{1}{\sqrt{2}} 
\begin{pmatrix}
1 & 1 \\ 1 & -1
\end{pmatrix}
\]
on modes $i$ and $j$, which maps the diagonal entry $k_{ii}$ to  
$\frac{1}{2}(k_{ii} + k_{jj} +2 \mathrm{Re}(k_{ij}) )$, and measure the resulting state in the standard basis. \
Repeat this $O(\log(1/\delta)/\gamma^2)$ times. \
As we did before, average the number of times mode $i$ is occupied to obtain an estimate for 
$\frac{1}{2}(k_{ii} + k_{jj} +2 \mathrm{Re}(k_{ij}) )$ to accuracy $\pm \gamma$.  \
Finally, using the previously obtained estimates for $k_{ii}$ and $k_{jj}$, solve for $\mathrm{Re}(k_{ij})$ (up to accuracy $\pm \gamma$). \
Repeat this process with the beamsplitter 
\[
\frac{1}{\sqrt{2}} 
\begin{pmatrix}
1 & i \\ 1 & -i
\end{pmatrix}
\]
to estimate the imaginary part of $k_{ij}$ to accuracy $\pm \gamma$. 

Our algorithm proceeds as follows. \ 
Simultaneously execute the process above on the pairs of modes $(1,2), (3, 4),$ $\ldots, (m-1,m)$, then $(1, 3), (2,4), \ldots, (m-2,m)$, and so on, until all the off-diagonal entries are recovered. \ It is easy to check that $O(m)$ measurement bases are needed to recover all off-diagonal entries of $K$. \
Then, we compute $Q\Lambda Q^\dagger$, an eigendecomposition of $\hat{K}$. \ 
Finally, we set $\hat{A}$ to be the $m \times n$ matrix corresponding to the first $n$ columns of $Q$, and return $\hat{A}$. \ 
Overall, the algorithm requires $O(m/\gamma^2)$ copies of the input state and $O(m^2/\gamma^2)$ time.  

The technical part is to understand how far $\ket{\hat{\Psi}}$ is from $\ket{\Psi}$ in trace distance, given that our algorithm begins by learning the entries of $K$ to within $\gamma$ in magnitude. \ 
To do this, we give a new proof that learning the kernel matrix is enough to learn the state, despite the kernel matrix only consisting of one- and two-mode correlations. \ 
\begin{theorem}[Informal version of \cref{thm:kernel-matrix-suffices}]
Let $\ket{\Psi}$ and $\ket{\hat{\Psi}}$ be $n$-fermion and $m$-mode non-interacting fermion states with kernel matrices $K$ and $\hat{K}$, respectively.
Then 
\[
d_{\rm{tr}}\left(\ket{\hat{\Psi}}, \ket{\Psi} \right) \leq \sqrt{n \norm{\hat{K} - K}_2},
\]
where $d_{\rm{tr}}(\cdot, \cdot)$ is the trace distance and $\norm{\cdot}_2$ is the spectral norm.
\end{theorem}

While this may seem surprising, this has been known to physicists since the 1960s and is (in some sense) the content of the Hohenberg-Kohn theorems \cite{hohenberg1964inhomogeneous} and Kohn-Sham equations \cite{kohn1965self}, which form the theoretical foundations of density functional theory. \ These results paved the way for computational methods in quantum chemistry, earning Walter Kohn and John Pople the 1998 Nobel Prize in Chemistry. \
Additionally, it is well-known that Wick's theorem \cite{wick1950evaluation} can be used to write higher-order correlations in terms of one- and two-mode correlations. \

Although this topic has received intense study for decades, we claim that our error analysis offers two improvements. \   
First, this is the first ``physics-free'' proof that kernel matrices suffice to learn the state: we make no mention of creation/annihilation operators, energy potentials, Hamiltonians, or the like. \ We believe our proof can be understood by any mathematician or theoretical computer scientist without a physics background and perhaps even undergraduates with a linear algebra background. \ 
Second, our theorem quantitatively relates the distance between the states and the distance between the kernel matrices, which (to our knowledge) has never been done. \

In \cref{sec:error-analysis} we use this theorem to show that the trace distance between $\hat{\ket{\Psi}}$ and $\ket{\Psi}$ is at most $\sqrt{2 n m \gamma}$. \
Therefore, the trace distance will be $\eps$-close if we set $\gamma$ to $\eps^2/2n m$.

\subsection{Related Work}
Previous work showed how to \textit{simulate} non-interacting fermions efficiently. \
In 2002, Valiant \cite{valiant2002quantum} introduced a class of quantum circuits called matchgate circuits and showed that they can be simulated classically in polynomial time. \
Soon after, Terhal and DiVincenzo \cite{terhal2002classical} (see also Knill \cite{knill2001fermionic}) showed that evolutions of non-interacting fermions give rise to unitary matchgate circuits. \
(See also \cite{aaronson2013bosonsampling}[Appendix 13] for a simpler and faster simulation algorithm.) \
Thus, the contribution of this paper is to complement these classical simulation results with an efficient \textit{learnability} result.

More broadly, \textit{quantum state tomography} is the task of constructing a classical description of a $d$-dimensional quantum mixed state, given copies of the state. \
With entangled measurements, the optimal number of copies for quantum state tomography is known to be $\Theta(d^2)$ due to Haah et al.~\cite{haah2017sample} and O'Donnell and Wright \cite{o2016efficient}. \
With unentangled measurements, the optimal number of copies is $\Theta(d^3)$ \cite{kueng2017low, haah2017sample, guctua2020fast}. \

Quantum state tomography can be computationally efficient in restricted settings. \ 
Montanaro \cite{montanaro2017learning} showed that \textit{stabilizer states} are efficiently learnable using measurements in the Bell basis. \ 
Cramer et al.\ \cite{cramer2010efficient} showed that states approximated by \textit{matrix product states} are efficiently learnable. \ Arunachalam et al.\ \cite{arunachalam2022phase} showed that some classes of \emph{phase states} are efficiently learnable. \
With this work, non-interacting fermion states is an additional class of quantum states for which we know computationally efficient learning algorithms.

Different models for learning properties of mixed states $\rho$ have been studied. \ For example, Aaronson \cite{aaronson2007learnability} showed that such states are learnable under the Probably Approximately Correct (PAC) model, using training sequences of length only logarithmic in the Hilbert space dimension. \
Since our goal is simply to reconstruct a distribution, we have no need for the PAC framework. \
Aaronson also introduced \textit{shadow tomography} \cite{doi:10.1137/18M120275X, 10.1145/3406325.3451109}, where, given a list of known two-outcome observables and copies of an unknown state, the goal is to estimate the expectation value of each observable with respect to the unknown state to additive accuracy. \ 
Although computationally inefficient, Aaronson showed that the number of copies of the input state scales logarithmically with both the number of observables and the Hilbert space dimension. \
Soon after, Huang, Kueng, and Preskill \cite{huang2020predicting} introduced \textit{classical shadows}, a shadow tomography algorithm that is computationally efficient for certain problem instances. \ For example, with random Clifford measurements, the classical time cost in classical shadows is dominated by computing quantities of the form $\braket{s|O|s}$, where $O$ is an observable and $\ket{s}$ is some stabilizer state, which is computationally efficient for certain observables. 

Recently, there have been several results that extend the classical shadows protocol to fermionic states and circuits \cite{zhao2021fermionic, wan2022matchgate, low2022classical, ogorman2022fermionic}. \
In particular, after the original version of this paper appeared but before the current version, O'Gorman \cite{ogorman2022fermionic} gave an algorithm for learning non-interacting fermion states, which is based on classical shadows. \ 
\ His learning algorithm uses $O(m^7n^2 \log(m/\delta)/\eps^4)$ samples and $O( m^9 n^2 \log(m/\delta)/\eps^4)$ time to learn a non-interacting fermion state to trace distance at most $\eps$ with probability at least $1-\delta$. \ Our work has substantially better sample and time complexities, and makes no use of randomized measurements.

Finally, \cite[Appendix C]{google2020hartree} proposes an algorithm for reconstructing a kernel matrix that involves iterating over $O(m)$ perfect matchings, just as our algorithm does, which we were unaware of until the final stages of our work. \ 
However, we note that their circuits/measurements differ from ours, and they do not provide an error analysis for their algorithm. 

\subsubsection{Determinantal Point Processes}\label{subsec:dpp}
For reasons having nothing to do with non-interacting fermions, problems extremely close to ours have already been studied in classical machine learning, in the field of \textit{Determinantal Point Processes} (DPPs). \  
A DPP is a model specified by an $m \times m$ matrix $K$ (typically symmetric or Hermitian), such that the probabilities of various events are given by various principal minors of $K$, exactly as for non-interacting fermions. \ The connection between DPPs and fermions has been known for decades \cite{macchi1975coincidence}. 

Two results in particular are directly relevant to us: Rising et al.\ \cite{rising2015efficient} and Urschel et al.\ \cite{urschel2017learning}. \
Rising et al.\ give an efficient algorithm for the \textit{symmetric principal minor assignment problem}: given a list of all $2^m$ principal minors of an unknown $m \times m$ symmetric matrix $K$, reconstruct $K$. \
Their algorithm is based on constructing an $m$-vertex graph with an edge from $i$ to $j$ whenever $K_{ij}\ne 0$, and then analyzing the minimum spanning trees and chordless cycles in that graph. \
Rising et al., however, do not do an error analysis (they assume exact knowledge of the principal minors), and they solve the problem for real and complex \emph{symmetric} matrices, whereas in our problem, the matrix is complex and Hermitian. \
This difference turns out to be surprisingly important, as the determinants of Hermitian matrices are always real, so much of the phase information vanishes---making the Hermitian case much harder. \ 

Urschel et al.\ \cite{urschel2017learning} further exploit connections between DPPs and graph theory to give an algorithm that recovers the entries of $K$, given samples from an unknown DPP. \ Their focus is on parameter learning (i.e., approximately recovering the entries of $K$), rather than learning the distribution induced by $K$ in, say, total variation distance. \ 
They again assume that the DPP is described by a real symmetric matrix. \ 

Our work provides insight on the Hermitian versions of these problems. \ 
Since there are many non-interacting fermion states with the same distribution over the standard basis, there must be many kernel matrices that are consistent with the same list of principal minors. \ The goal for the Hermitian principal minor assignment problem is to output any such matrix. \ 

It is also clear that the Hermitian version of Urschel et al.'s problem is \textit{impossible}. \ 
Samples from an unknown DPP correspond to standard basis measurement outcomes, and learning the entries of a DPP corresponds to learning the kernel matrix. \ However, in \cref{thm:kernel-matrix-suffices}, we show that learning the kernel matrix suffices to learn the entire state, which is impossible when restricted to standard basis measurements. \ 
(Even distinguishing $\ket{+}$ from $\ket{-}$ is information-theoretically impossible when given only standard basis measurements.) \ 
What one could hope for, and which we leave open, is to learn \emph{some} kernel matrix that gives rise to a distribution close in variation distance to the observed one. 

\subsubsection{Errors in Previous Version}
A previous version of this manuscript \cite{aaronson2021efficient}---where we claimed to recover a non-interacting fermion distribution using only standard basis measurements---had serious errors, which we explain below.\footnote{These errors were previously explained in \href{https://scottaaronson.blog/?p=5706}{https://scottaaronson.blog/?p=5706}.} 

In the previous manuscript, we sought to recover the rows $v_1, \ldots, v_m \in \C^n$ of the $m \times n$ column-orthonormal matrix $A$ up to isometry (see \cref{subsec:main-ideas} for the definition of $A$). \
By estimating the two-mode correlations (i.e., the probability of finding a fermion in both mode $i$ and mode $j$), 
one can deduce the approximate value of $\abs{\langle v_i, v_j\rangle}$, i.e., the absolute value of the inner product, for any $i \neq j$. \
From that information, our goal was to recover $v_1, \ldots, v_m$ (or, more precisely, their relative configuration in $n$-dimensional space up to isometry).

The approach was as follows:
if we knew $\langle v_i, v_j \rangle$ for all $i \neq j$, then we would get linear equations that iteratively constrained each $v_i$ in terms of $\langle v_i, v_j \rangle$ for $j < i$, so all that would be required would be to solve those linear systems, and then show that the solution is robust with respect to small errors in our estimates of each $\langle v_i, v_j \rangle$. \
While it is true that the measurements only reveal $\abs{\langle v_i, v_j \rangle}$ rather than $\langle v_i, v_j \rangle$ itself, the ``phase information'' in $\langle v_i, v_j \rangle$ seemed manifestly irrelevant, since it in any case depended on the irrelevant global phases of $v_i$ and $v_j$ themselves.

Alas, it turns out that the phase information \emph{does} matter. \
As an example, suppose we only knew the following about three unit vectors $u, v, w \in \R^3$:
\[
\abs{\langle u, v \rangle} = \abs{\langle u, w \rangle} = \abs{\langle w, v \rangle} = \frac{1}{2}.
\]
This is not enough to determine these vectors up to isometry! \ 
In one class of solution, all three vectors belong to the same plane, like so:
\[
u = \left(1,0,0\right),\quad v = \left(\frac{1}{2},\frac{\sqrt{3}}{2},0\right), \quad w=\left(-\frac{1}{2},\frac{\sqrt{3}}{2},0\right).
\]
In a completely different class of solution, the three vectors do not belong to the same plane, and instead look like three edges of a tetrahedron meeting at a vertex:
\[
u=\left(1,0,0 \right),\quad v=\left( \frac{1}{2}, \frac{\sqrt{3}}{2},0 \right), \quad w= \left(\frac{1}{2}, \frac{\sqrt{3}}{6}, \sqrt{\frac{2}{3}}\right).
\]
Both classes of solutions are shown in \cref{fig:my_label}. \
These solutions correspond to different sign choices for $\abs{\langle u, v \rangle}, \abs{\langle u, w \rangle},$ and $\abs{\langle w, v \rangle}$---choices that collectively matter, even though each one is individually irrelevant.

\begin{SCfigure}[1.95]
    \centering
    \includegraphics[width=.28\textwidth]{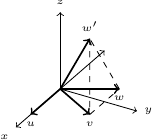}
    \caption{The vectors $u, v,$ and $w$ belong to the same plane, while $v,w,$ and $w^\prime$ are edges of a tetrahedron that meet at a vertex. \
    Both sets of vectors satisfy the same inner product constraints \emph{up to phase information}, yet belong to two distinct classes of solutions.}
    \label{fig:my_label}
\end{SCfigure}

It follows that, even in the special case where the vectors are all real, the two-mode correlations are not enough to determine the vectors' relative positions. \
And alas, the situation is even worse when, as for us, the vectors can be complex. \
Indeed, even for systems of $2$ fermions and $4$ modes, it is possible to exhibit distributions that require complex vectors. \
For example, let
\[
A = \begin{pmatrix}
\frac{1}{2} & 0 \\
\frac{1}{2\sqrt 2}  & \sqrt{\frac{2}{5}} \\
\frac{1}{2\sqrt 2} & \sqrt{\frac{2}{5}}i \\
\frac{1}{\sqrt 2} & -\frac{1}{\sqrt{10}} - \frac{1}{\sqrt{10}}i
\end{pmatrix}.
\]
Denote the one-mode correlations by $p_i$ and the two-mode correlations by $p_{ij}$. \
Then by explicit calculation (e.g., compute $K$, then compute the appropriate principal minors of $K$), one can verify that
\[
p_1 = \frac{1}{4}, \quad p_2 = p_3 = \frac{21}{40}, \quad p_{12} = p_{13} = p_{23} = \frac{1}{10}.\footnote{We do not need to look at the fourth mode to show that complex numbers are necessary.}
\]
Hence, to represent the corresponding distribution with real vectors only, one must find three vectors in $\R^2$ with squared lengths $\frac{1}{4}, \frac{21}{40},$ and $\frac{21}{40}$, respectively, such that the squared area of the parallelogram created by any pair of vectors is $\frac{1}{10}$. \ One can verify that this is not possible. 

We conclude that any possible algorithm for learning fermionic distributions from standard-basis
measurements will have to solve a system of nonlinear equations (albeit, a massively overconstrained
system that is guaranteed to have a solution); it will have to use three-mode correlations (i.e.,
statistics of triples of fermions), and indeed (one can show) in some exceptional cases four-mode
correlations and above; it will sometimes have to output complex solutions even when all the
input data is real (which rules out a purely linear-algebraic approach); and 
it will have to learn the phase information relevant to the \emph{distribution} (rather than the entire state). 

Finally, separate from the issues above, we are grateful to Andrew Zhao for identifying an error in the most recent version of this manuscript (v3 on the arXiv). \ In our error analysis, we assumed that the $m \times m$ matrix output by our algorithm is rank-$n$, when, in fact, it is full-rank. \ In short, we handle this issue by computing an eigendecomposition of said matrix and discarding the smallest $m - n$ eigenvalues and eigenvectors. \ 

\section{Preliminaries}
Throughout this work, we use the following notation. \
$[n] \coloneqq \{1,\ldots,n\}$. \  
Let $X \in \C^{n \times n}$ and $v \in \C^n$. \
Then $\norm{v}_p \coloneqq (\sum_{i \in [n]}\abs{v_i}^p)^{1/p}$ is the $\ell_p$-norm, and $\norm{X}_{2} \coloneqq \sup_{\norm{v}_2 = 1} \norm{Xv}_2$ is the spectral norm. \
Let $\rho$ and $\sigma$ be two quantum mixed states. \
Then $d_\mathrm{{tr}}(\rho, \sigma) \coloneqq \frac{1}{2} \mathrm{tr}\left( \sqrt{(\rho-\sigma)^2}\right) = \frac{1}{2} \sum_i \abs{ \lambda_i}$ is the trace distance, where the $\lambda_i$'s are eigenvalues of the error matrix $\rho-\sigma$.

\subsection{Non-Interacting Fermions}

We briefly review non-interacting fermions. \ For more detail, see e.g.\ \cite[Chapters 7-10]{aaronson2022introduction}. 

As mentioned above, $\Lambda_{m,n}$ is the set of all lists $S = (s_1, \ldots, s_m$) such that $s_i \in \{0,1\}$ and $\sum_{i\in [m]}s_i= n$. \
One can easily verify that the total number of configurations $\abs{\Lambda_{m,n}} = \binom{m}{n}$. Define $M \coloneqq \binom{m}{n}$. \
The set $\{\ket{S}\}_{S \in \Lambda_{m,n}}$ is a basis for systems of $n$ non-interacting fermions in $m$ modes, which we refer to as the standard basis (also called the Fock basis or occupation number basis). \
Hence, a non-interacting fermion state is a unit vector in an $M$-dimensional complex Hilbert space with the form
\[
\ket{\Psi} = \!\! \sum_{S \in \Lambda_{m,n}}\!\!\!\! \alpha_S \ket{S}, \quad \text{where $\sum_{S \in \Lambda_{m,n}} \abs{\alpha_S}^2 = 1$},
\]
and, upon measurement, we observe an element $S \in \Lambda_{m,n}$ with probability 
\[
\Pr[S] = \abs{\braket{\Psi | S}}^2 = \abs{\alpha_S}^2.
\]

The transformations on non-interacting fermion states can be described by $m \times m$ unitary matrices, which can always be constructed with $O(m^2)$ elementary operations called \textit{beamsplitters} and \textit{phaseshifters} (see \cite{PhysRevLett.73.58} for a proof of this statement). \
A beamsplitter acts on two modes and has the form 
\[
\begin{pmatrix}
  1 & & & & & \\
  & \ddots & & & & \\
&& \cos \theta &-\sin \theta && \\
&&\sin \theta & \cos \theta && \\
  & & & & \ddots & \\
  & & & &  & 1 \\
\end{pmatrix},
\]
while a phaseshifter applies a complex phase to a single mode and has the form 
\[
\begin{pmatrix}
  1 & & & & \\
  & \ddots & & & \\
  &  & e^{i \theta} & & \\
  & & & \ddots & \\
  & & &  & 1 \\
\end{pmatrix}.
\]

An $m \times m$ unitary $U$ describes how a single fermion would be scattered from one mode to the others. \
This induces an $M \times M$ unitary transformation $\varphi(U)$ on the Hilbert space of $n$-fermion states, where $\varphi$ is the homomorphism that lifts $m \times m$ unitary transformations to $n$-fermion unitary transformations. \
For an $m \times m$ unitary matrix $U$, one way to define $\varphi$ is
\begin{align}\label{eq:hom}
\bra{S}\varphi(U)\ket{T} = \det(U_{S,T}),
\end{align}
for all $S = (s_1, \ldots, s_m), T = (t_1, \ldots, t_m) \in \Lambda_{m,n}$, where $U_{S, T}$ is the $n\times n$ submatrix obtained by removing row $i$ of $U$ if $s_i = 0$ and removing column $j$ of $U$ if $t_j = 0$. \
One can also view $\varphi(U)$ as a unitary transformation on $m$-qubit states. \ In this case, $\varphi(U)$ is a Hamming-weight-preserving matchgate unitary \cite{valiant2002quantum}. \
This follows directly from Terhal and DiVincenzo's work \cite{terhal2002classical}, which shows the equivalence between evolutions of non-interacting fermions and matchgate circuits. 

Intuitively, the reason the determinant arises in \cref{eq:hom} is that $\bra{S} \varphi(U) \ket{T}$ is the sum over the $n!$ permutations that take $n$ fermions in configuration $S$ to configuration $T$, each permutation contributing to the overall amplitude. \
When a permutation from $S$ to $T$ is odd, its contribution to the overall amplitude has a phase factor of $-1$, while when the permutation is even the phase factor is $1$. \
This is the antisymmetry property of fermions: swapping two fermions picks up a $-1$ phase in the amplitude.\footnote{
Meanwhile, bosons are symmetric under transpositions, so no minus signs show up. This is precisely why permanents arise when computing amplitudes for non-interacting bosons, while determinants show up for non-interacting fermions.}

\subsection{Problem Setup}
We use the following notation: 
$\ket{1_n} \coloneqq \ket{1, \ldots, 1, 0, \ldots, 0}$ is the standard initial state (where $n$ fermions occupy the first $n$ modes), $U \in \C^{m \times m}$ is the unitary that prepares the unknown non-interacting fermion state (i.e., $\ket{\Psi} = \varphi(U)\ket{1_n}$), and $A \in \C^{m \times n}$ is  the $m \times n$ column-orthonormal matrix corresponding to the first $n$ columns of $U$. \
Define $K \coloneqq A A^\dagger$. 

For each element $S = (s_1, \ldots, s_m) \in \Lambda_{m,n}$, let $A_S$ be the $n \times n$ submatrix obtained by removing row $i$ of $A$ if $s_i = 0$, and let $K_S$ be the $n \times n$ submatrix obtained by removing row and column $i$ of $K$ if $s_i = 0$. \
Then from \cref{eq:hom}, it follows that 
\[
\ket{\Psi} = \sum_{S \in \Lambda_{m,n}} \det(A_S) \ket{S}, 
\]
and $\calD_K$, the probability distribution over $S \in \Lambda_{m,n}$ obtained by measuring the state $\ket{\Psi} = \varphi(U)\ket{1_n}$ in the standard basis, is given by 
\[
\abs{\bra{1_n}\varphi(U)\ket{S}}^2 = \abs{\det(A_S)}^2 = \det(K_S),
\]
where the last equality
uses the fact that, for any square matrices $X$ and $Y$, $\det(X)^* = \det(X^\dagger)$ and $\det(X)\det(Y) = \det(XY)$. \
Further, for any list $S = (s_1, \ldots, s_m)$ where $\sum_i s_i = k < n$, the marginal probability that those $k$ modes are occupied is $\det(K_S)$.

It is easy to verify that $K$ is Hermitian, positive semi-definite, and a projector ($K^2 = K$); and that $\tr(K) = n$. \
Additionally, observe that the $(i,j)$ entry is the inner product between the $i$th and $j$th rows of $A$ (i.e., $K$ is a \textit{Gram matrix}). \
Finally, note that $A$ is a \emph{highly non-unique} description of $\ket{\Psi}$ and $K$. \ Let $R$ be any $n \times n$ unitary matrix. \ Then $A$ and $A R$ describe the same state:
\[
\ket{\Psi} = \sum_{S \in \Lambda_{m,n}} \det(A_SR) \ket{S}
= \sum_{S \in \Lambda_{m,n}} \det(A_S) \det(R) \ket{S}
= \sum_{S \in \Lambda_{m,n}} \det(A_S)\ket{S},
\]
where, in the last equality, we use the fact that the determinant of a unitary matrix is a complex unit, which only adds an irrelevant global phase. \
Meanwhile, the kernel matrix of $\ket{\Psi}$ is unchanged: $K = (AR)(AR)^\dagger = AA^\dagger$.

Applying a unitary $V$ maps $\ket{\Psi} = \varphi(U) \ket{1_n}$ to $\varphi(VU) \ket{1_n}$. \
It is easy to check that the probability that we observe $S$ upon measuring $\varphi(VU)\ket{1_n}$ in the standard basis is 
\[
 \abs{\bra{1_n}\varphi(VU)\ket{S}}^2 = \det( (VKV^\dagger)_S),
\]
and in general, applying $V$ has the following effect on the matrix $K$: $K \mapsto V K V^\dagger$. 

Given copies of $\ket{\Psi}$ and the ability to apply beamsplitter networks before measurement, 
our goal is to output a matrix $\hat{A} \in \C^{m \times n}$ such that $\ket{\hat{\Psi}} = \sum_S \det(\hat{A}_S)$ is $\eps$-close to $\ket{\Psi}$ in trace distance.

\section{Learning Algorithm}\label{sec:learning-algorithm}
In this section, we present our learning algorithm, which is given copies of an unknown non-interacting fermion state, and outputs an $m \times n$ matrix $\hat{A}$ such that the corresponding state $\ket{\hat{\Psi}} = \sum_{S} \det(\hat{A}_S) \ket{S}$ is close to the original in trace distance. \
The algorithm has three phases: first, we learn the diagonal entries of the kernel matrix $K$ with standard basis measurements, then we learn the off-diagonal entries by measuring the unknown state in $O(m)$ different bases. \ Finally, we decompose the reconstructed kernel matrix into the $m \times n$ output matrix $\hat{A}$.

\begin{myalgorithm}[label={learning-algorithm}]{Efficient tomography of non-interacting fermion states}
\textbf{Input:}
Black-box access to copies of $\ket{\Psi}$ (the input state), $\gamma \in (0,1)$ (accuracy parameter), and $\delta \in (0,1)$ (confidence parameter).

\textbf{Output:} an $m \times n$ matrix $\hat{A} \in \C^{m \times n}.$\vspace{.09in}

\begin{enumerate}[leftmargin=*,label={\arabic*:}]
\item Measure $O(\log(1/\delta)/\gamma^2)$ copies of $\ket{\Psi}$ in the standard basis and estimate the one-mode correlations for all $m$ modes. \ Set $\hat{k}_{ii}$ to the empirical estimate that mode $i$ is occupied.
\item Choose $O(m)$ different perfect matchings among the $m$ modes, which together cover all possible $(i,j)$ pairs.  
\item \textbf{for each} perfect matching \textbf{do}
\item $\quad$ Apply the beamsplitter 
\[
\frac{1}{\sqrt{2}} 
\begin{pmatrix}
1 & 1 \\ 1 & -1
\end{pmatrix}
\]
to each pair of modes in the perfect matching and measure in the standard basis. \ Repeat this $O(\log(1/\delta)/\gamma^2)$ times, and use the measurement data to estimate the one-mode correlations.  
\item $\quad$ Repeat the previous step with the beamsplitter
\[
\frac{1}{\sqrt{2}} 
\begin{pmatrix}
1 & i \\ 1 & -i
\end{pmatrix}.
\]
\item $\quad$
For each pair $(i,j)$ in the perfect matching, 
the beamsplitter network in step 4 maps the one-mode correlation $k_{ii}$ to 
$k^\prime_{ii} \coloneqq \frac{1}{2}(k_{ii} + k_{jj} +2 \mathrm{Re}(k_{ij}))$. \
Denote the $i$th estimate obtained in step 4 by $\hat{k}^\prime_{ii}$. \ 
Let $\mathrm{Re}(\hat{k}_{ij})$ be the estimate of $\mathrm{Re}(k_{ij})$ obtained by solving the following equation:
\[
\hat{k}^\prime_{ii} = \frac{1}{2}(\hat{k}_{ii} + \hat{k}_{jj} +2 \mathrm{Re}(\hat{k}_{ij})),
\]
where $\hat{k}_{ii}$ and $\hat{k}_{jj}$ are the estimates from step 1. 
\item $\quad$
Repeat the previous step with the estimates from step 5 to obtain estimate $\mathrm{Im}(\hat{k}_{ij})$ for each pair $(i,j)$. \
(Note that the beamsplitter network in step 5 maps $k_{ii}$ to $\frac{1}{2}(k_{ii} + k_{jj} + 2 \mathrm{Im}(k_{ij}) )$, for each pair $(i,j)$ in the perfect matching.)
\item $\quad$
For each pair $(i,j)$ in the perfect matching, set $\hat{k}_{ij} = \mathrm{Re}(\hat{k}_{ij}) +   i \mathrm{Im}(\hat{k}_{ij})$ and $\hat{k}_{ji} = \hat{k}_{ij}^*$. 
\item Let $\hat{K} = (\hat{k}_{ij}) \in \C^{m \times m}$, and let $Q \Lambda Q^\dagger$ be an eigendecomposition of $\hat{K}$. \ Set $\hat{A}$ to be the $m \times n$ matrix corresponding to the first $n$ columns of $Q$.
\item \Return $\hat{A}$. 
\end{enumerate}
\end{myalgorithm}

For each measurement basis, we estimate $O(m)$ entries of $K$ to within $\gamma$ in magnitude, which can be accomplished with $O(\log(1/\delta)/\gamma^2)$ copies \cite[Theorem 9]{canonne2020short} and $O(m \log(1/\delta)/\gamma^2)$ time. \ 
To estimate the off-diagonal entries to the target accuracy, an additional constant factor appears in the sample complexity, which is absorbed into the $O(\log(1/\delta)/\gamma^2)$. \
We use $O(m)$ measurement bases in total, so the overall copy and time complexities are $O(m \log(1/\delta) / \gamma^2)$ and $O(m^2 \log(1/\delta) / \gamma^2)$ respectively. \
Our algorithm then computes an eigendecomposition of an $m \times m$ matrix, which requires $O(m^3)$ time, but computing this decomposition is not the bottleneck in our algorithm.

We note that $\hat{K}$ is clearly Hermitian by construction, so the eigendecomposition of $\hat{K}$ exists. \ Following convention, it is assumed that the eigenvalues are ordered from largest to smallest (i.e., the first column of $Q$ corresponds to the largest eigenvalue of $\hat{K}$, and so on). \ 
The output of our algorithm $\hat{A}$ is column-orthonormal because the columns of the unitary $Q$ are orthonormal. \ 
Therefore, $\hat{A}$ describes a non-interacting fermion state $\ket{\hat{\Psi}} = \sum_S \det(\hat{A}_S) \ket{S}$. \
In the next section, we show that if $\gamma = \frac{\eps^2}{2nm }$, then the trace distance between $\ket{\Psi}$ and $\ket{\hat{\Psi}}$ is at most $\eps$. \
Hence, for the two states to be $\eps$-close in trace distance, $ O(m^3 n^2 \log(1/\delta) / \eps^4)$ copies and $O(m^4 n^2 \log(1/\delta)/\eps^4)$ time suffice.
 
\section{Error Analysis}\label{sec:error-analysis}
In this section, we show that the trace distance between $\ket{\Psi}$ and $\ket{\hat{\Psi}}$ is at most $\sqrt{2 n m \gamma}$. \ 
Therefore, if $\gamma = \frac{\eps^2}{2nm}$, then the trace distance between $\ket{\Psi}$ and $\ket{\hat{\Psi}}$ is at most $\eps$. \

The error analysis is presented in two parts. \  
First, we show that the trace distance between any two non-interacting fermion states is bounded above the spectral distance between their kernel matrices. \ Then we show that the output of our algorithm is close to the original state in trace distance. 

\subsection{The Kernel Matrix Suffices}
We prove that the trace distance between two non-interacting fermion states is upper bounded by the spectral difference between their kernel matrices. \ To show this, we need the following two lemmas.

\begin{lemma}\label{fact:inductive}
Let $a_1, a_2, \ldots, a_n \in [0,1]$. \ Then 
\[ 
1 - \prod_i a_i \leq n \max_i 1 - a_i.
\]
\end{lemma}
\begin{proof}
For any $x, y \in [0,1]$,  
\[
1 - xy = 1 - x + x - xy = (1 - x) + x(1 - y) \leq (1-x) + (1-y).
\]
We can inductively apply this to get 
\[
1 - \prod_i a_i \leq \sum_i 1 - a_i \leq n \max_i 1 - a_i.\qedhere
\]
\end{proof}

\begin{lemma}\label{fact:kernel}
Let $A, \hat{A} \in \C^{m \times n}$ $(m \geq n)$ be $m \times n$ matrices, and let $A$ be column-orthonormal. \ 
Define $K \coloneqq A A^\dagger$ and $\hat{K} \coloneqq \hat{A}\hat{A}^\dagger$. \
Let $\Sigma  = \mathrm{diag}(\sigma_1, \ldots, \sigma_n)$ where $\sigma_i$ are the singular values of $\hat{A}^\dagger A$. 
Then
\[
\norm{I - \Sigma^2}_2 \leq \norm{\hat{K} - K}_2.
\]
\end{lemma}
\begin{proof}
Since $A$ is column-orthonormal, $A^{\dagger}A = I$ ($I$ is the identity matrix). \
Let $Q \Sigma V^\dagger$ be a singular value decomposition of $\hat{A}^\dagger A$. \ First, note that 
\[
V \Sigma^2 V^\dagger 
= (Q \Sigma V^\dagger)^\dagger Q \Sigma V^\dagger 
= (\hat{A}^\dagger A)^\dagger \hat{A}^\dagger A 
= A^\dagger \hat{A} \hat{A}^\dagger A.
\]
Therefore, 
\begin{align*}
 \norm{I - \Sigma^2}_{2}
=  \norm{I - A^\dagger \hat{A} \hat{A}^\dagger A}_2
=  \norm{A^\dagger A - A^\dagger \hat{A} \hat{A}^\dagger A}_2
=  \norm{A^\dagger (A - \hat{A} \hat{A}^\dagger A)}_2
\leq \norm{A - \hat{A} \hat{A}^\dagger A}_2,
\end{align*}
where the first step uses the fact that the spectral norm is unitarily invariant and the final step follows from the submultiplicativity of matrix norms and that $\norm{A^\dagger}_2 = \norm{A}_2 = 1$ since $A$ is column-orthonormal. 
Finally, 
\[
\norm{A - \hat{A} \hat{A}^\dagger A}_2 
= \norm{AA^\dagger A - \hat{A} \hat{A}^\dagger A}_2 
\leq \norm{AA^\dagger  - \hat{A} \hat{A}^\dagger }_2 
= \norm{K - \hat{K}}_2.\qedhere
\]
\end{proof}

We are now ready to show that if two kernel matrices are close, then the corresponding states will also be close.

\begin{theorem}\label{thm:kernel-matrix-suffices}
Let $\ket{\Psi}$ and $\ket{\hat{\Psi}}$ be non-interacting fermion states of $n$ fermions and $m$ modes described by the $m \times n$ column-orthonormal matrices $A, \hat{A} \in \C^{m \times n}$, respectively. \
Define $K \coloneqq A A^\dagger$ and $\hat{K} \coloneqq \hat{A}\hat{A}^\dagger$. \
Then
\[
d_{\rm{tr}}\left(\ket{\hat{\Psi}}, \ket{\Psi} \right) \leq \sqrt{n \norm{\hat{K} - K}_2}.
\]
\end{theorem}
\begin{proof}
Recall that $\ket{\Psi}$ and $\ket{\hat{\Psi}}$ can be written as
\[
\ket{\Psi} = \sum_{S \in \Lambda_{m,n}} \det(A_S) \ket{S} \qquad \text{and} \qquad  \ket{\hat{\Psi}} = \sum_{S \in \Lambda_{m,n}} \det(\hat{A}_S)  \ket{S}
\]
for the column-orthonormal matrices $A, \hat{A} \in \C^{m \times n}$. \ 
Then
\begin{align*}
d_{\rm{tr}}\left(\ket{\hat{\Psi}}, \ket{\Psi} \right)
&= \sqrt{1 - \abs{\braket{\hat{\Psi}|\Psi}}^2} \\
&= \sqrt{1 - \Abs{\sum_{S\in \Lambda_{m,n}} \det(\hat{A}_S)^* \det(A_S) }^2} \\
&= \sqrt{1 - \Abs{\sum_{S\in \Lambda_{m,n}} \det(\hat{A}_S^\dagger) \det(A_S) }^2} \\
&= \sqrt{1 - \abs{ \det(\hat{A}^\dagger A)}^2}, \\
\end{align*}
where the second-to-last step follows because, for any square matrix $X$, $\det(X^T) = \det(X)$ and $\det(X)^* = \det(X^*)$; and the final step follows from the Cauchy-Binet formula.

Let $Q \Sigma V^\dagger$ be a singular value decomposition of $\hat{A}^\dagger A$, where $\sigma_1, \ldots, \sigma_n$ are the singular values on the diagonal of $\Sigma$. \ Then
\begin{align*}\label{eq:det}
\abs{\det(\hat{A}^\dagger A) }^2 
= \abs{\det(Q \Sigma V^\dagger) }^2 
= \abs{\det(Q) \det(\Sigma) \det(V^\dagger) }^2 
= \det(\Sigma)^2
= \prod_i \sigma_i^2.
\end{align*}
Note that, for all $i \in [n]$, $\sigma_i \leq \sigma_i(\hat{A}^\dagger) \norm{A}_2 = 1$ \cite[Chapter 3]{horn1994topics}, where $\sigma_i(\hat{A}^\dagger)$ is the $i$th singular value of $\hat{A}^\dagger$. \ 
Plugging this into our bound on the trace distance, we get 
\begin{align*}
d_{\rm{tr}}\left(\ket{\hat{\Psi}}, \ket{\Psi} \right)
&= \sqrt{1 - \abs{ \det(\hat{A}^\dagger A)}^2} \\
&= \sqrt{1 - \prod_i \sigma_i^2 }\\
&\leq \sqrt{n \left(\max_i 1 - \sigma_i^2\right)} && \text{(By \cref{fact:inductive}).} \\
&= \sqrt{n \norm{I - \Sigma^2}_2} \\
&\leq \sqrt{n \norm{\hat{K} - K}_2} && \text{(By \cref{fact:kernel}).}\qedhere
\end{align*}
\end{proof}

\subsection{Completing the Analysis}
We prove that the fermionic state output by \cref{learning-algorithm} is close to the input state in trace distance. \ 
To do so, we make use of Weyl's inequality, which implies that the spectrum of a Hermitian matrix is stable under small perturbations. \ 

\begin{theorem}[A Consequence of Weyl's Inequality \cite{weyl1912asymptotische}]\label{thm:weyl-inequality}
Let $M, N, R \in \C^{n \times n}$ be $n \times n$ Hermitian matrices such that $M = N + R$. \
Let $\lambda_1, \ldots, \lambda_n$ be the eigenvalues of $M$, and let $\mu_1, \ldots, \mu_n$ be the eigenvalues of $N$. \ Then, for all $i \in [n]$, 
\[
\abs{\lambda_i - \mu_i} \leq \norm{R}_2.
\]
\end{theorem}

We are now ready to prove that \cref{learning-algorithm} successfully learns a non-interacting fermion state.

\begin{theorem}
Let $\hat{A}$ be the output of \cref{learning-algorithm} when given $\ket{\Psi}$ as input, and let $\ket{\hat{\Psi}}$ be the non-interacting fermion state described by $\hat{A}$. \ Then 
\[
d_{\rm{tr}}\left(\ket{\hat{\Psi}}, \ket{\Psi} \right) \leq \sqrt{2 nm \gamma}. 
\]
\end{theorem}
\begin{proof}
It is convenient to recall the last steps of \cref{learning-algorithm}:

In our algorithm, once the quantum measurements are complete, we have a matrix $\hat{K}$ whose entries are within $\gamma$ in magnitude of $K$. \
We then compute the eigendecomposition $Q \Lambda Q^\dagger$ of $\hat{K}$, where the first column of $Q$ is the eigenvector corresponding to the largest eigenvalue of $\hat{K}$ and so on. \ 
Finally, we set $\hat{A}$ to be the $m \times n$ matrix corresponding to the first $n$ columns of $Q$, and output $\hat{A}$. \ 
Therefore, the kernel matrix of $\ket{\hat{\Psi}}$ is $\hat{A}\hat{A}^\dagger$, and, 
by \cref{thm:kernel-matrix-suffices}, the trace distance between $\ket{\Psi}$ and $\ket{\hat{\Psi}}$ is bounded above by $\sqrt{n \norm{\hat{A}\hat{A}^\dagger - K}_2}$, where $K = A A^\dagger$ is the kernel matrix corresponding to the input state $\ket{\Psi}$. \
To complete the proof, we must bound $\norm{\hat{A}\hat{A}^\dagger - K}_2$. \ To that end, by the triangle inequality, 
\begin{align}\label{eq:first-bound-after-triangle}
    \norm{\hat{A}\hat{A}^\dagger - K}_2 
    &\leq \norm{\hat{A}\hat{A}^\dagger - \hat{K}}_2 + \norm{\hat{K} - K}_2.
\end{align}

Observe that $\hat{K} = K + E$, where $E$ is a perturbation of $K$ whose entries have magnitude at most $\gamma$. \
Therefore, $\norm{\hat{K} - K}_2 = \norm{K + E - K}_2 = \norm{E}_2$. \
The error matrix $E$ is Hermitian because $K$ and $\hat{K}$ are Hermitian, and Hermitian matrices are closed under addition/subtraction. \
Therefore, since $\hat{K}$, $K$, and $E$ are all Hermitian, we can use \cref{thm:weyl-inequality} to upper bound the absolute difference between the eigenvalues of $K$ and $\hat{K}$. \ In particular, the absolute difference between the $i$th eigenvalues of $K$ and $\hat{K}$ is at most $\norm{E}_2$, for all $i \in [m]$. \

Let $\mathbb{1}_{n} = \mathrm{diag}(1, \ldots, 1, 0, \ldots, 0) \in \R^{m \times m}$ be the diagonal matrix whose first $n$ diagonal entries are $1$ and the rest $0$. Observe that $\hat{A}\hat{A}^\dagger = Q \mathbb{1}_n Q^\dagger$, since $\hat{A}$ is the first $n$ columns of $Q$, and recall that $\hat{K}= Q \Lambda Q^\dagger$. \ 
Therefore, 
\[
\norm{\hat{A}\hat{A}^\dagger - \hat{K}}_2 = \norm{Q \mathbb{1}_n Q^\dagger - Q \Lambda Q^\dagger}_2 = \norm{\mathbb{1}_n - \Lambda}_2, 
\]
where, in the last equality, we use the fact that the spectral norm is unitarily invariant. \
Note that $\Lambda$ contains the eigenvalues of $\hat{K}$, and $\mathbb{1}_n$ contains the eigenvalues of $K$ (since $K$ is a trace-$n$, rank-$n$ projector). \
Therefore, $\norm{\mathbb{1}_n - \Lambda}_2$ is the maximum absolute difference between the eigenvalues of $K$ and $\hat{K}$, which is at most $\norm{E}_2$, as we argued in the previous paragraph. 

The trivial bound on the spectral norm of $E$ is the Frobenius norm of $E$, which is maximum when all entries of $E$ have magnitude $\gamma$. \ Specifically, $\norm{E}_2 \leq \norm{E}_F \leq m \gamma$. \
Note $\norm{E}_2 = \norm{E}_F$ when the entries of $E$ are all $\gamma$, so we cannot hope for a tighter bound on $\norm{E}_2$.

Plugging this into \cref{eq:first-bound-after-triangle},
\begin{align*}
    \norm{\hat{A}\hat{A}^\dagger - K}_2 
    &\leq \norm{\hat{A}\hat{A}^\dagger - \hat{K}}_2 + \norm{\hat{K} - K}_2  \\
    &\leq \norm{E}_2 + \norm{E}_2 \\ 
    &\leq 2 m \gamma.\qedhere
\end{align*}
\end{proof}

\section{Connections to Quantum State Tomography}\label{sec:tomography}
Although physically different, our problem is closely related to the quantum state tomography problem. \
In quantum state tomography, we want to recover an unknown Hermitian matrix, namely a $d$-dimensional mixed state $\rho \in \C^{d \times d}$, and applying a quantum circuit $V$ to $\rho$ maps $\rho$ to $V \rho V^\dagger$. \
In our problem, applying a unitary $V$ to $\ket{\Psi}$ maps $K$ to $V K V^\dagger$, where $K$ is an unknown Hermitian matrix, and our algorithm is able to recover the entries of $K$ to within $\gamma$ in magnitude. \
Therefore, our algorithm can also be viewed as a state tomography algorithm: measure copies of $\rho$ in the $O(d)$ measurement bases obtained by choosing perfect matchings that cover all $(i,j)$ pairs and output the resulting matrix $\hat{\rho}$ (skipping the last few steps of the algorithm that involve computing an eigendecomposition). \ 
As before, the algorithm requires $O(d \log(1/\delta)/\gamma^2)$ copies and $O(d^2 \log(1/\delta)/\gamma^2)$ time. 

The error analysis is slightly different than for learning a fermionic state. \ For the state tomography problem, we want the output matrix $\hat{\rho}$ to be close to $\rho$ in trace distance, which is proportional to $\norm{\hat{\rho} - \rho}_1$, whereas, for fermionic tomography, our trace distance upper bound is proportional to $\norm{\hat{K} - K}_2$ (see \cref{thm:kernel-matrix-suffices}).  \ Hence, to analyze the performance of our algorithm for state tomography, we must upper bound $\norm{\hat{\rho} - \rho}_1$.

Recall that the error matrix $E = \hat{\rho} - \rho$ is Hermitian and has entries with magnitude at most $\gamma$. \ For $i \in [m]$, let $\lambda_i$ denote the eigenvalues of $E$. \
Then 
\[
d_{\rm{tr}}(\hat{\rho}, \rho) = \frac{1}{2} \norm{\hat{\rho} - \rho}_1 = \frac{1}{2} \norm{E}_1 = \frac{1}{2} \sum_i \abs{\lambda_i}
\leq \frac{\sqrt{d}}{2} \sqrt{\sum_i \abs{\lambda_i}^2} \leq \frac{1}{2} d^{3/2} \gamma.
\]
The first inequality follows from the fact that the arithmetic mean is bounded above by the quadratic mean, and the second inequality follows from the fact that $\sqrt{\sum_i \abs{\lambda_i}^2} = \norm{E}_F \leq d \gamma$. \
For $d_{\rm{tr}}(\hat{\rho}, \rho) \leq \eps$, we must set $\gamma = 2 d^{-3/2} \eps$. \ 
The resulting copy complexity is $O(d^4 \log(1/\delta)/\eps^2)$ and time complexity is $O(d^5 \log(1/\delta)/\eps^2)$. \
Note that the optimal copy complexity for quantum state tomography \emph{with unentangled measurements} is $\Theta(d^3/\eps^2)$ \cite{kueng2017low, haah2017sample, guctua2020fast}, compared to $\Theta(d^2/\eps^2)$ with entangled measurements \cite{o2016efficient, haah2017sample}. \ 

Finally, we show that our upper bound on $\norm{\hat{\rho} - \rho}_1$ is tight. \ 
Let $F$ be the $d \times d$ Fourier transform whose $(i,j)$ entry is $\exp(2\pi i j/d)/\sqrt{d}$. Then  $F$ scaled by a factor of $\sqrt{d}\gamma$ is a valid error matrix whose $1$-norm is equal to $d^{3/2}\gamma$. \ Indeed, any $d \times d$ unitary matrix scaled by a factor of $\gamma \sqrt{d}$ will match our upper bound.

\section{Open Problems}\label{sec:open-problems}
Perhaps the most interesting open problem is to give an algorithm to learn fermionic distributions using only standard basis measurements. \
Specifically, the following problems remain open: 

\begin{enumerate}
    \item \textbf{Learn real DPPs in variation distance.} Given sample access to a distribution induced by an $m \times m$ symmetric matrix $K \in \R^{m \times m}$, output an $m \times m$ matrix $\hat{K}$ such that the induced distribution is close in variation distance. \ (See \cref{subsec:dpp} for detail on DPPs.)
    \item \textbf{Hermitian principal minor assignment problem.} \ 
    Given a list of all $2^m$ principal minors of an unknown Hermitian matrix $K \in \C^{m \times m}$, reconstruct any Hermitian matrix that is consistent with that list. 
    \item \textbf{Learn non-interacting fermion distributions with standard basis measurements.} \ Given sample access to a non-interacting fermion distribution, efficiently learn the distribution in total variation distance. 
\end{enumerate}

The third problem is in some sense a combination of the first and second. \ To solve the first problem, we believe that the connections between DPPs and graph theory used in Rising et al.\ \cite{rising2015efficient} and Urschel et al.\ \cite{urschel2017learning} should be enough to develop an efficient algorithm. \ 
As discussed in \cref{subsec:dpp}, our work shows that there are many kernel matrices that have the same principal minors (indeed, any set non-interacting states that have the same distribution over the standard basis will give rise to a set of kernel matrices that are consistent with the same list of principal minors). \ The goal for the second and third problems is to output any one of the valid matrices. 

For the second problem, however, the following example shows that some combinatorial information about the kernel matrix $K$, above and beyond the obvious complex conjugation ambiguities, is \emph{not} determined even in principle by $K$'s principal minors. \ 
Consider the following $4 \times 4$ Hermitian matrix:
\[
K = 
\begin{pmatrix}
1&  1 & 1 & 1 \\
1 & 1 & a & b^* \\
1 & a^* &1 & c \\
1 & b & c^* &1
\end{pmatrix}.
\]
Suppose we have learned, by looking at the $2\times 2$ and $3\times 3$ principal minors, that
\[
a=e^{ix},\quad
b=e^{iy},\quad \text{and}\quad
c=e^{iz},
\]
where $\abs{x} = \abs{y} = \abs{z} =w$ for some $w$ that is known. \
That is, we have determined $a,b,$ and $c$ up to complex conjugation, and up to complex conjugation they are all equal. \
By looking at the bottom-most $3 \times 3$ principal minor, we can learn $\mathrm{Re}(abc)$ and hence $\abs{x + y + z}$. \
Suppose that this is also $w$. \ 
From the $4\times 4$ minor, combined with the $2\times 2$ and $3 \times 3$ minors, we get one additional piece of information, namely:
\[ 
\mathrm{Re}(ab) + \mathrm{Re}(ac) + \mathrm{Re}(bc).
\]
Suppose that, as expected, this is $2+\cos(2w)$. \
Then even though we have extracted all information from the principal minors, there are \textit{still} three essentially different solutions possible. Namely, 
\[
(1)\,\,x=y=w \text{ and } z=-w, \qquad
(2)\,\,x=z=w \text{ and } y=-w, \qquad
(3)\,\,y=z=w \text{ and } x=-w.
\]

Of course, for fermionic distributions, the matrix $K$ must be Hermitian, positive semi-definite,  and a projector, and $\mathrm{rank}(K) = n$. \ 
The example above is neither positive semi-definite nor a projector. \ However, we conjecture that this example can be embedded into a larger matrix that does satisfy these constraints. 

Other directions for future work include improving the copy and time complexities of our algorithm, or giving conditional or unconditional lower bounds. \
Currently, the best lower bound we know is that $\Omega(m/\log m)$ measurements are needed, just from an information-theoretic argument (each measurement gives at most $n \log m$ bits of information and the state is characterized by $2nm$ real parameters). \ Similarly, $\Omega(mn)$ time is needed just to write down the output.

Since $\mathrm{rank}(K) = n$, it should be possible to reduce the number of measurement bases from $O(m)$ to $O(n)$ (perhaps with the low-rank matrix recovery techniques used in \cite{kueng2017low}). \ Doing so would yield an immediate improvement in the copy and time complexities of our algorithm. 

Also, just as our algorithm can be adapted to quantum state tomography, it is possible that the converse holds. \ Can quantum state tomography algorithms (in the entangled or unentangled measurement setting) be adapted to non-interacting fermion state tomography? \ Also, do quantum state tomography lower bounds imply lower bounds for learning non-interacting fermion state? \ In analogy with quantum state tomography, perhaps $\Theta(mn)$ copies are optimal to learn non-interacting fermion states with entangled measurements and $\Theta(mn^2)$ copies are optimal with unentangled measurements. 

It would be interesting to generalize our algorithm---for example, to superpositions over different numbers of fermions, or fermionic circuits that take inputs---and to find other classes of quantum states that admit efficient learning algorithms (for example, perhaps low-entanglement states or the outputs of small-depth circuits or low-stabilizer-complexity states \cite{grewal2022low}). \  We remark that \cite{PRXQuantum.3.020328} gives evidence that generalizing our algorithm to superpositions over different numbers of fermions may not be possible unless one limits the number of terms in the superposition. \
On the other hand, \cite[Theorem 8]{PRXQuantum.3.020328} might be useful in developing learning algorithms for matchgate circuits. 

Finally, what can be said about learning non-interacting \emph{boson} states? \ The goal would be to reconstruct an $m \times n$ column-orthonormal matrix $A$ given copies of a non-interacting boson state. \
However, the boson case is even trickier than the fermion case. \
In particular, boson statistics no longer depend only on the inner products between the rows of $A$, the way fermion statistics do. \
Indeed, even if we collected enough information to reconstruct a bosonic state, it seems that any algorithm would have to solve a quite complicated set of nonlinear equations. 

\section*{Acknowledgements}
We thank Andrew Zhao for notifying us that the previous version of this manuscript contained an error and providing other insightful comments. \
We also thank Yuxuan Zhang, Alex Kulesza, Ankur Moitra, William Kretschmer, Dax Enshan Koh, Andrea Rocchetto, and Patrick Rall for helpful discussions, and Alex Arkhipov, William Kretschmer, and Daniel Liang for helpful comments on a previous version of this manuscript. 

\bibliographystyle{alphaurl}
\bibliography{fermion}

\newcommand{\etalchar}[1]{$^{#1}$}
\begin{thebibliography}{WHLB22}

\bibitem[AA14]{aaronson2013bosonsampling}
Scott Aaronson and Alex Arkhipov.
\newblock Boson{S}ampling is {F}ar {F}rom {U}niform.
\newblock {\em Quantum Information and Computation}, 14(15-16):1383--1423,
  2014.

\bibitem[Aar07]{aaronson2007learnability}
Scott Aaronson.
\newblock The learnability of quantum states.
\newblock {\em Proceedings of the Royal Society A: Mathematical, Physical and
  Engineering Sciences}, 463(2088):3089--3114, 2007.
\newblock \href {https://doi.org/10.1098/rspa.2007.0113}
  {\path{doi:10.1098/rspa.2007.0113}}.

\bibitem[Aar20]{doi:10.1137/18M120275X}
Scott Aaronson.
\newblock {Shadow Tomography of Quantum States}.
\newblock {\em SIAM Journal on Computing}, 49(5):STOC18--368--STOC18--394,
  2020.
\newblock \href {https://doi.org/10.1145/3188745.3188802}
  {\path{doi:10.1145/3188745.3188802}}.

\bibitem[Aar22]{aaronson2022introduction}
Scott Aaronson.
\newblock {Introduction to Quantum Information Science II Lecture Notes}.
\newblock 2022.
\newblock
  \href{https://scottaaronson.com/qisii.pdf}{scottaaronson.com/qisii.pdf}.

\bibitem[ABDY22]{arunachalam2022phase}
Srinivasan Arunachalam, Sergey Bravyi, Arkopal Dutt, and Theodore~J. Yoder.
\newblock Optimal algorithms for learning quantum phase states, 2022.
\newblock \href {https://doi.org/10.48550/arxiv.2208.07851}
  {\path{doi:10.48550/arxiv.2208.07851}}.

\bibitem[AG21]{aaronson2021efficient}
Scott Aaronson and Sabee Grewal.
\newblock {Efficient Learning of Non-Interacting Fermion Distributions}.
\newblock {\em arXiv preprint arXiv:2102.10458v2}, 2021.

\bibitem[BO21]{10.1145/3406325.3451109}
Costin B\u{a}descu and Ryan O'Donnell.
\newblock {Improved Quantum Data Analysis}.
\newblock In {\em Proceedings of the 53rd Annual ACM SIGACT Symposium on Theory
  of Computing}, STOC 2021, page 1398–1411, 2021.
\newblock \href {https://doi.org/10.1145/3406325.3451109}
  {\path{doi:10.1145/3406325.3451109}}.

\bibitem[Can20]{canonne2020short}
Cl{\'e}ment~L. Canonne.
\newblock A short note on learning discrete distributions.
\newblock {\em arXiv preprint arXiv:2002.11457}, 2020.
\newblock \href {https://doi.org/10.48550/arXiv.2002.11457}
  {\path{doi:10.48550/arXiv.2002.11457}}.

\bibitem[CPF{\etalchar{+}}10]{cramer2010efficient}
Marcus Cramer, Martin~B. Plenio, Steven~T. Flammia, Rolando Somma, David Gross,
  Stephen~D. Bartlett, Olivier Landon-Cardinal, David Poulin, and Yi-Kai Liu.
\newblock Efficient quantum state tomography.
\newblock {\em Nature communications}, 1(1):1--7, 2010.
\newblock \href {https://doi.org/10.1038/ncomms1147}
  {\path{doi:10.1038/ncomms1147}}.

\bibitem[GIKL22]{grewal2022low}
Sabee Grewal, Vishnu Iyer, William Kretschmer, and Daniel Liang.
\newblock {Low-Stabilizer-Complexity Quantum States Are Not Pseudorandom}.
\newblock {\em arXiv preprint arXiv:2209.14530}, 2022.
\newblock \href {https://doi.org/10.48550/arXiv.2209.14530}
  {\path{doi:10.48550/arXiv.2209.14530}}.

\bibitem[GKKT20]{guctua2020fast}
Madalin Gu{\c{t}}{\u{a}}, Jonas Kahn, Richard Kueng, and Joel~A. Tropp.
\newblock Fast state tomography with optimal error bounds.
\newblock {\em Journal of Physics A: Mathematical and Theoretical},
  53(20):204001, 2020.
\newblock \href {https://doi.org/10.1088/1751-8121/ab8111}
  {\path{doi:10.1088/1751-8121/ab8111}}.

\bibitem[HHJ{\etalchar{+}}17]{haah2017sample}
Jeongwan Haah, Aram~W. Harrow, Zhengfeng Ji, Xiaodi Wu, and Nengkun Yu.
\newblock Sample-{O}ptimal {T}omography of {Q}uantum {S}tates.
\newblock {\em IEEE Transactions on Information Theory}, 63(9):5628--5641,
  2017.
\newblock \href {https://doi.org/10.1109/TIT.2017.2719044}
  {\path{doi:10.1109/TIT.2017.2719044}}.

\bibitem[HJ94]{horn1994topics}
Roger~A. Horn and Charles~R. Johnson.
\newblock {\em {Topics in Matrix Analysis}}.
\newblock Cambridge University Press, 1994.
\newblock \href {https://doi.org/10.1017/CBO9780511840371}
  {\path{doi:10.1017/CBO9780511840371}}.

\bibitem[HK64]{hohenberg1964inhomogeneous}
Pierre Hohenberg and Walter Kohn.
\newblock {Inhomogeneous Electron Gas}.
\newblock {\em Physical Review}, 136(3B):B864, 1964.
\newblock \href {https://doi.org/10.1103/PhysRev.136.B864}
  {\path{doi:10.1103/PhysRev.136.B864}}.

\bibitem[HKP20]{huang2020predicting}
Hsin-Yuan Huang, Richard Kueng, and John Preskill.
\newblock Predicting many properties of a quantum system from very few
  measurements.
\newblock {\em Nature Physics}, 16(10):1050--1057, 2020.
\newblock \href {https://doi.org/10.1038/s41567-020-0932-7}
  {\path{doi:10.1038/s41567-020-0932-7}}.

\bibitem[Kni01]{knill2001fermionic}
Emanuel Knill.
\newblock Fermionic {L}inear {O}ptics and {M}atchgates.
\newblock {\em arXiv preprint quant-ph/0108033}, 2001.
\newblock \href {https://doi.org/10.48550/arXiv.quant-ph/0108033}
  {\path{doi:10.48550/arXiv.quant-ph/0108033}}.

\bibitem[KRT17]{kueng2017low}
Richard Kueng, Holger Rauhut, and Ulrich Terstiege.
\newblock {Low Rank Matrix Recovery From Rank One Measurements}.
\newblock {\em Applied and Computational Harmonic Analysis}, 42(1):88--116,
  2017.
\newblock \href {https://doi.org/10.1016/j.acha.2015.07.007}
  {\path{doi:10.1016/j.acha.2015.07.007}}.

\bibitem[KS65]{kohn1965self}
Walter Kohn and Lu~Jeu Sham.
\newblock {Self-Consistent Equations Including Exchange and Correlation
  Effects}.
\newblock {\em Physical Review}, 140(4A):A1133, 1965.
\newblock \href {https://doi.org/10.1103/PhysRev.140.A1133}
  {\path{doi:10.1103/PhysRev.140.A1133}}.

\bibitem[Low22]{low2022classical}
Guang~Hao Low.
\newblock Classical shadows of fermions with particle number symmetry, 2022.
\newblock \href {https://doi.org/10.48550/arxiv.2208.08964}
  {\path{doi:10.48550/arxiv.2208.08964}}.

\bibitem[Mac75]{macchi1975coincidence}
Odile Macchi.
\newblock {The Coincidence Approach to Stochastic Point Processes}.
\newblock {\em Advances in Applied Probability}, 7(1):83--122, 1975.
\newblock \href {https://doi.org/10.2307/1425855} {\path{doi:10.2307/1425855}}.

\bibitem[Mon17]{montanaro2017learning}
Ashley Montanaro.
\newblock Learning stabilizer states by {B}ell sampling.
\newblock {\em arXiv:1707.04012}, 2017.
\newblock \href {https://doi.org/10.48550/arXiv.1707.04012}
  {\path{doi:10.48550/arXiv.1707.04012}}.

\bibitem[ODMZ22]{PRXQuantum.3.020328}
Micha\l{} Oszmaniec, Ninnat Dangniam, Mauro~E.S. Morales, and Zolt\'an
  Zimbor\'as.
\newblock {Fermion Sampling: A Robust Quantum Computational Advantage Scheme
  Using Fermionic Linear Optics and Magic Input States}.
\newblock {\em PRX Quantum}, 3:020328, 2022.
\newblock \href {https://doi.org/10.1103/PRXQuantum.3.020328}
  {\path{doi:10.1103/PRXQuantum.3.020328}}.

\bibitem[O'G22]{ogorman2022fermionic}
Bryan O'Gorman.
\newblock Fermionic tomography and learning, 2022.
\newblock \href {https://doi.org/10.48550/arxiv.2207.14787}
  {\path{doi:10.48550/arxiv.2207.14787}}.

\bibitem[OW16]{o2016efficient}
Ryan O'Donnell and John Wright.
\newblock Efficient {Q}uantum {T}omography.
\newblock In {\em Proceedings of the Forty-Eighth Annual ACM Symposium on
  Theory of Computing}, pages 899--912, 2016.
\newblock \href {https://doi.org/10.1145/2897518.2897544}
  {\path{doi:10.1145/2897518.2897544}}.

\bibitem[QC20]{google2020hartree}
Google~AI Quantum and Collaborators.
\newblock {Hartree-Fock on a superconducting qubit quantum computer}.
\newblock {\em Science}, 369(6507):1084--1089, 2020.
\newblock URL: \url{https://10.1126/science.abb9811}, \href
  {https://doi.org/10.1126/science.abb9811}
  {\path{doi:10.1126/science.abb9811}}.

\bibitem[RKT15]{rising2015efficient}
Justin Rising, Alex Kulesza, and Ben Taskar.
\newblock {An Efficient Algorithm for the Symmetric Principal Minor Assignment
  Problem}.
\newblock {\em Linear Algebra and its Applications}, 473:126--144, 2015.
\newblock \href {https://doi.org/10.1016/j.laa.2014.04.019}
  {\path{doi:10.1016/j.laa.2014.04.019}}.

\bibitem[RZBB94]{PhysRevLett.73.58}
Michael Reck, Anton Zeilinger, Herbert~J. Bernstein, and Philip Bertani.
\newblock Experimental realization of any discrete unitary operator.
\newblock {\em Phys. Rev. Lett.}, 73:58--61, 1994.
\newblock \href {https://doi.org/10.1103/PhysRevLett.73.58}
  {\path{doi:10.1103/PhysRevLett.73.58}}.

\bibitem[TD02]{terhal2002classical}
Barbara~M. Terhal and David~P. DiVincenzo.
\newblock Classical simulation of noninteracting-fermion quantum circuits.
\newblock {\em Physical Review A}, 65(3):032325, 2002.
\newblock \href {https://doi.org/10.1103/PhysRevA.65.032325}
  {\path{doi:10.1103/PhysRevA.65.032325}}.

\bibitem[UBMR17]{urschel2017learning}
John Urschel, Victor-Emmanuel Brunel, Ankur Moitra, and Philippe Rigollet.
\newblock {Learning Determinantal Point Processes with Moments and Cycles}.
\newblock In {\em International Conference on Machine Learning}, pages
  3511--3520. PMLR, 2017.

\bibitem[Val02]{valiant2002quantum}
Leslie~G. Valiant.
\newblock Quantum {C}ircuits {T}hat {C}an {B}e {S}imulated {C}lassically in
  {P}olynomial {T}ime.
\newblock {\em SIAM Journal on Computing}, 31(4):1229--1254, 2002.
\newblock \href {https://doi.org/10.1137/S0097539700377025}
  {\path{doi:10.1137/S0097539700377025}}.

\bibitem[Wey12]{weyl1912asymptotische}
Hermann Weyl.
\newblock {Das asymptotische Verteilungsgesetz der Eigenwerte linearer
  partieller Differentialgleichungen (mit einer Anwendung auf die Theorie der
  Hohlraumstrahlung)}.
\newblock {\em Mathematische Annalen}, 71(4):441--479, 1912.
\newblock \href {https://doi.org/10.1007/BF01456804}
  {\path{doi:10.1007/BF01456804}}.

\bibitem[WHLB22]{wan2022matchgate}
Kianna Wan, William~J. Huggins, Joonho Lee, and Ryan Babbush.
\newblock {Matchgate Shadows for Fermionic Quantum Simulation}, 2022.
\newblock \href {https://doi.org/10.48550/arxiv.2207.13723}
  {\path{doi:10.48550/arxiv.2207.13723}}.

\bibitem[Wic50]{wick1950evaluation}
Gian-Carlo Wick.
\newblock {The Evaluation of the Collision Matrix}.
\newblock {\em Physical Review}, 80(2):268, 1950.
\newblock \href {https://doi.org/10.1103/PhysRev.80.268}
  {\path{doi:10.1103/PhysRev.80.268}}.

\bibitem[ZRM21]{zhao2021fermionic}
Andrew Zhao, Nicholas~C. Rubin, and Akimasa Miyake.
\newblock {Fermionic Partial Tomography via Classical Shadows}.
\newblock {\em Phys. Rev. Lett.}, 127:110504, 2021.
\newblock \href {https://doi.org/10.1103/PhysRevLett.127.110504}
  {\path{doi:10.1103/PhysRevLett.127.110504}}.

\end{thebibliography}

\end{document}